\newtheoremstyle{myplain}
  {9pt}
  {9pt}
  {\itshape}
  {\parindent}
  {\scshape}
  {:}
  {.5em}
  {}
\newtheoremstyle{mydefinition}
  {9pt}
  {9pt}
  {\itshape}
  {\parindent}
  {\scshape}
  {:}
  {.5em}
  {}
\newtheoremstyle{myremark}
  {9pt}
  {9pt}
  {}
  {\parindent}
  {\scshape}
  {:}
  {.5em}
  {}
\theoremstyle{myplain}
\newtheorem{theorem}{Theorem}
\newtheorem*{theorem*}{Theorem}
\theoremstyle{mydefinition}
\theoremstyle{myremark}
\renewcommand{\cite}{\citet}
\newcommand{\cR}{\mathcal{R}} 
\newcommand{\cT}{\mathcal{T}}
\numberwithin{equation}{section}
\begin{document}

\title{Revealed Stochastic Preference:\\A One-Paragraph Proof and Generalization}
\date{January 16th, 2019}
\author{J\"{o}rg Stoye\thanks{Department of Economics, Cornell University, stoye@cornell.edu. This note was motivated by joint work with Yuichi Kitamura. Thanks to an anonymous referee for helpful comments. Financial support through NSF Grants SES-1260980 and SES-1824375 is gratefully acknowledged.}}

\maketitle

\begin{abstract}
\small
\cite{mcfadden-richter} and later \cite{McFadden05} show that the \textit{Axiom of Revealed Stochastic Preference} characterizes rationalizability of choice probabilities through random utility models on finite universal choice spaces. This note proves the result in one short, elementary paragraph and extends it to set valued choice. The latter requires a different axiom than is reported in \cite{McFadden05}.
\medskip
\small

\noindent \textbf{Keywords:} Revealed Stochastic Preference; Random Utility.

\noindent \textbf{JEL Code:} D11.
\end{abstract}

\section{Introduction}

\cite{mcfadden-richter} and \cite{McFadden05} show that, for a finite choice universe and singleton-valued choice, the \textit{Axiom of Revealed Stochastic Preference} (ARSP) is necessary and sufficient for rationalizability of stochastic choice through a random utility model. This note proves the same result in one short and ``chatty" paragraph. Indeed, setting up an efficient notation  makes the result so immediate that one might hesitate to report it. However, the precise argument appears to have been overlooked.\footnote{The proofs in \cite{mcfadden-richter} and \cite{border} take several pages. The proof in \cite{McFadden05} is one dense paragraph but goes through several intermediate characterizations and invokes duality, though some of that is to also establish other characterizations. \cite[][p.123]{GulPesendorfer06} call the ARSP and related axioms ``complicated restrictions on arbitrary finite collections of decision problems" that are ``difficult to interpret." They establish the result by relating ARSP to their own axioms. Within the considerable literature inspired by \cite{Falmagne}, the basic geometry of random utility models is folk knowledge, but the focus is on other questions.} Furthermore, it immediately applies to set-valued choice. The resulting generalization is novel; in particular, if ``decisiveness" of choice were dropped in \cite{McFadden05}, the axiom reported there would be too weak.

\section{The Classic Setting and Result}

Consider a set of choice problems defined as subsets of a finite choice universe.\footnote{This setting encompasses a classic demand setting with finitely many budgets because all revealed preference information is then contained in choice probabilities corresponding to a certain finite partition of the choice universe. See \cite{McFadden05} and \cite{KS}.} Suppose a researcher observes the corresponding choice probabilities. The question is whether these probabilities can be rationalized through maximization of a random utility function. The random utility could reflect randomness in an individual's utility assessment but also unobserved heterogeneity if data are from a repeated cross-section. Also, it will become clear that the underlying notion of nonstochastic utility is very flexible.

It will be useful to formalize everything in terms of vectors. Thus, identify the finite choice universe with components of a vector $X=(x_1,\dots,x_K)$. Any choice problem $C$ is a subvector of $X$. For any observable $C$, the corresponding choice probabilities will be collected in a vector $\pi(C) \in \Delta^{|C|-1}$, the $(|C|-1)$-dimensional unit simplex. Let $\mathcal{C}=(C_1,\dots,C_J)$ collect all observable choice problems in arbitrary (but henceforth fixed) order. We can then represent the totality of observed choice probabilities by the vector $\Pi \equiv (\pi(C_1),\dots,\pi(C_J))$. Note that $\Pi \in [0,1]^I$, where $I \equiv \sum_{j=1}^J{\vert C_j\vert}$, and that the components of $\Pi$ sum to $J$. 

A set of particular interest will be the collection of rationalizable \textit{nonstochastic} choice behaviors, i.e. the set of rationalizable choice functions on $\mathcal{C}$. For example, rationalizability could be defined through independence of irrelevant alternatives for individual choice \citep{Arrow59} or (in a demand context) the Generalized Axiom of Revealed Preference \citep{Afriat67}; in either case, the nonstochastic rationality concept then is utility maximization. But ``rationalizability" is used in a very broad sense -- one could easily define more narrow, more lenient, or also nonnested criteria.\footnote{See \cite{DKQS} for a recent random utility model of the latter kind.} Any choice function can be expressed as a vector $R \in \{0,1\}^I$ that is interpreted just like $\Pi$ but whose entries are binary, with exactly one entry of $1$ for each choice problem. Let the set $\mathcal{R}$, which is necessarily finite, collect the vector representations of all rationalizable choice functions.

We call $\Pi$ \textit{stochastically rationalizable} if it can be generated by the ``random utility" extension of the notion of rationalizability embodied in $\mathcal{R}$, that is, if the choice probabilities $\Pi$ can be generated by drawing a nonstochastic ``choice type" $R$ from some (arbitrary but fixed) distribution on $\mathcal{R}$ and then executing this type's nonstochastic choices. In short, the set of stochastically rationalizable choice probabilities $\Pi$ is the convex hull of $\mathcal{R}$, a finite polytope that is henceforth denoted $\text{co}(\mathcal{R})$.

\cite{mcfadden-richter} characterize this notion of rationalizability through the ARSP, an axiom about sequences of subsets of choice problems and corresponding choice probabilities. The present notation allows for a succinct statement. To this purpose, let a \textit{trial} $T \in \{0,1\}^I$ be a vector whose ``$1$" entries correspond to (some or all) elements of the same choice problem. Intuitively, any trial characterizes a specific subset of a specific choice problem. (If the same subset of $X$ appears as subset of distinct choice problems, this gives rise to distinct trials.) Let the finite set $\cT$ collect all possible trials. Then we have:
\medskip

\textbf{Axiom of Revealed Stochastic Preference (ARSP)}

For any finite sequence $(T_1,\dots,T_M) \in \cT^M$, $M \leq \infty$, one has
\begin{equation*}
\sum_{m=1}^M{T_m \Pi} \leq \max_{R \in \mathcal{R}}{\sum_{m=1}^M{T_m R}}. 
\end{equation*}
\medskip

In words, the choice probabilities corresponding to any finite sequence of trials cannot add up to more than the maximal analogous sum induced by a nonstochastic choice function in $\cR$. We then have:

\begin{theorem} \label{T1} 
$\Pi$ is stochastically rationalizable if, and only if, ARSP holds.
\end{theorem}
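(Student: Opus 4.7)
My plan is to prove necessity by routine linearity and to handle sufficiency by the contrapositive, invoking the separating hyperplane theorem on the finite polytope $\text{co}(\cR)$ and then exploiting the simplex equalities $\sum_{i \in C_j} \Pi_i = \sum_{i \in C_j} R_i = 1$ to massage the separating vector into a nonnegative integer combination of trials.

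For necessity I would simply note that if $\Pi = \sum_r \lambda_r R_r$ with $\lambda_r \geq 0$ and $\sum_r \lambda_r = 1$, then for any trial sequence,
\[ \sum_{m=1}^M T_m \Pi \;=\; \sum_r \lambda_r \sum_{m=1}^M T_m R_r \;\leq\; \max_{R \in \cR} \sum_{m=1}^M T_m R, \]
so ARSP is inherited by every element of $\text{co}(\cR)$.

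For sufficiency I would argue the contrapositive. Suppose $\Pi \notin \text{co}(\cR)$. Since $\cR$ is finite, $\text{co}(\cR)$ is a compact convex polytope with rational vertices, so the strict separating hyperplane theorem (together with a small rational perturbation) yields a vector $v \in \Q^I$ with $v \cdot \Pi > v \cdot R$ for every $R \in \cR$. The crucial observation is that $\Pi$ and each $R \in \cR$ both sum to $1$ on the block of coordinates corresponding to any choice problem $C_j$; therefore, replacing $v$ by $v'$ obtained by adding a constant $c_j$ to every entry indexed by $C_j$ shifts $v \cdot \Pi$ and $v \cdot R$ by the same amount $c_j$ and preserves the strict inequality. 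Choosing each $c_j$ sufficiently large makes $v'$ nonnegative and still rational, and scaling by a common denominator $N$ yields a nonnegative integer vector $Nv'$. Within each $C_j$, the nonnegative integer restriction of $Nv'$ decomposes as a sum of indicator vectors of singletons in $C_j$, each of which is a trial in $\cT$. Concatenating these indicators over all $j$ produces a finite sequence $(T_1,\dots,T_M)$ with $\sum_m T_m = Nv'$, whence
\[ \sum_{m=1}^M T_m \Pi \;=\; Nv' \cdot \Pi \;>\; \max_{R \in \cR} Nv' \cdot R \;=\; \max_{R \in \cR} \sum_{m=1}^M T_m R, \]
contradicting ARSP.

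The main obstacle I anticipate is the sign mismatch in the previous step: the separating hyperplane supplies a general rational vector, whereas trials must be $\{0,1\}$-valued and supported inside a single choice problem. The per-block shift trick is what bridges this gap, and its availability, a direct consequence of the simplex constraints satisfied by $\Pi$ and every $R \in \cR$, is the one nontrivial idea in the proof; everything else is mechanical bookkeeping (finding a rational separator, clearing denominators, and chopping integer vectors into singletons).
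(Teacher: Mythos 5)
Your proposal is correct and is essentially the paper's argument: the paper runs the same steps as a chain of equivalences (integer $\to$ rational $\to$ nonnegative real $\to$ arbitrary real directions, then the convex-hull/separation characterization), and its key trick of replacing $T$ by $T+\bm{1}\cdot\Vert T\Vert_\infty$ using $\bm{1}\Pi=\bm{1}R=J$ is just the uniform version of your per-block shift exploiting the same adding-up constraints. Your identification of the sign-fixing shift as the one nontrivial idea matches the paper's own assessment of its second contribution.
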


\begin{proof}
The sequence $(T_1,\dots,T_M)$ matters only through $\sum_{m=1}^M T_m$. This sum is an $I$-vector with integer components and (because the canonical basis of $\mathbb{R}^I$ is in $\cT$) any $I$-vector with integer components can be expressed as such a sum. Hence, ARSP can be rewritten as
\begin{eqnarray*}
&& T\Pi \leq \max_{R \in \mathcal{R}}{TR}~~\text{for all vectors}~T \in \mathbb{R}_+^I~\text{with integer components} \\
&\overset{(1)}{\Leftrightarrow} & T\Pi \leq \max_{R \in \mathcal{R}}{TR}~~\text{for all vectors}~T \in \mathbb{R}_+^I~\text{with rational components} \\
&\overset{(2)}{\Leftrightarrow} & T\Pi \leq \max_{R \in \mathcal{R}}{TR}~~\text{for all vectors}~T \in \mathbb{R}_+^I \\
&\overset{(3)}{\Leftrightarrow} & T\Pi \leq \max_{R \in \mathcal{R}}{TR}~~\text{for all vectors}~T \in \mathbb{R}^I \\
&\overset{(4)}{\Leftrightarrow} & \Pi \in \text{co}(\cR).
\end{eqnarray*}
Here, $(1)$ holds because inequalities can be multiplied by positive scalars; $(2)$ holds because weak inequalities are preserved under limit taking and all vectors in $\mathbb{R}_+^I$ can be approximated by rational ones. To see $(3)$, let $\bm{1} \equiv (1,\dots,1) \in \mathbb{R}^I$, then $\bm{1}\Pi=J$ but also $\bm{1}R=J$ for any $R \in \mathcal{R}$. Thus, $$T\Pi \leq \max_{R \in \mathcal{R}}{TR} \Leftrightarrow (T+\bm{1} \cdot \Vert T \Vert_\infty)\Pi \leq \max_{R \in \mathcal{R}}{(T+\bm{1} \cdot \Vert T \Vert_\infty)R},$$ and $T+\bm{1} \cdot \Vert T \Vert_\infty \in \mathbb{R}^I_+$ even if $T$ is not. (Recall that $\Vert T \Vert_\infty$ is the largest absolute value taken by any component of $T$.) Finally, $(4)$ is an elementary property of convex hulls.
\end{proof}

The theorem is not new but the proof reflects two contributions. First, its almost embarrassing simplicity is partly due to efficient notation, i.e. to recognizing that with a finite universal choice space, all important quantities can be expressed as vectors and choice probabilities then as inner products. That ARSP prevents $\Pi$ from being separable from $\text{co}(\cR)$ in any positive direction is then near immediate and has also been anticipated in the literature. The second contribution is to notice that adding-up constraints on $\Pi$ and $R$ relate separation in \textit{any} direction to separation in a positive direction.

The result's exposition also renders immediate some (previously known) facts about ARSP and $\text{co}(\cR)$. For example, while there are infinitely many conceivable sequences $(T_1,...,T_M)$, a finite subset of ``essential" sequences suffices to characterize stochastic rationalizability. In particular, the essential sequences correspond to gradients of full-dimensional faces of $\text{co}(\cR)$. These gradients are rational because $\text{co}(\cR)$ is spanned by binary vectors; therefore, the essential sequences are indeed finite. The gradients are known in some special cases: For binary choice and $K \leq 5$, they correspond to the Block-Marschak triangle inequalities \citep{Dridi1980}, and they can be explicitly computed for the consumer choice problem with finitely many goods and $3$ budgets \citep[][display 4.7]{KS} or with finitely many budgets and $2$ goods \citep{HS15}. However, computing them is equivalent to deriving the halfspace representation of the polytope $\text{co}(\mathcal{R})$ from its vertex representation, and this transformation is in general computationally prohibitive.\footnote{See \cite{Ziegler} for more on these representations. The difficulty of this computation is illustrated by the intricate literature on finding facet defining inequalities for the linear order polytope \citep{Fishburn92}.}

\section{Extension to Set Valued Choice}

With a finite choice universe, the above analysis is easily extended to choice correspondences, i.e. set-valued choice functions. Specifically, define a new choice universe $\tilde{X}\equiv 2^X$, the subsets of $X$. Any choice problem $C$ in the original problem can be identified with a new problem $\tilde{C} \equiv 2^C$. Then any data set concerning set valued choice on $(C_1,...,C_J)$ can be interpreted as data set with unique choice on $(\tilde{C}_1,...,\tilde{C}_J)$. A choice probability $\tilde{\Pi}$, a collection $\tilde{\cR}$ of rationalizable nonstochastic behaviors, and a set $\tilde{\cT}$ of possible trials can be defined as before. Asserting the ARSP for these new quantities yields:
\medskip

\textbf{Generalized Axiom of Revealed Stochastic Preference (GARSP)}

For any finite sequence $(\tilde{T}_1,\dots,\tilde{T}_M) \in \tilde{\cT}^M$, $M \leq \infty$, one has
\begin{equation*}
\sum_{m=1}^M{\tilde{T}_m \tilde{\Pi}} \leq \max_{\tilde{R} \in \tilde{\cR}}{\sum_{m=1}^M{\tilde{T}_m \tilde{R}}}.
\end{equation*}
\medskip

Then the previous proof establishes:

\begin{theorem} \label{T2}
$\tilde{\Pi}$ is stochastically rationalizable if, and only if, GARSP holds.
\end{theorem}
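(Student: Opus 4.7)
The plan is to observe that the data, the notion of rationalizability, and the trial space have been set up so that Theorem \ref{T1} applies verbatim to the tilded quantities. Under the reinterpretation $\tilde{X} = 2^X$ and $\tilde{C}_j = 2^{C_j}$, a set-valued choice on $C_j$ becomes a singleton-valued choice on $\tilde{C}_j$, and any data set on $(C_1,\dots,C_J)$ is relabeled as a singleton data set on $(\tilde{C}_1,\dots,\tilde{C}_J)$. GARSP is, by construction, ARSP applied to the relabeled problem, and stochastic rationalizability of $\tilde{\Pi}$ is the obvious transcription of stochastic rationalizability for set-valued choice.

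The key step is to verify that the proof of Theorem \ref{T1} goes through unchanged. That proof uses only three structural properties of $(\Pi,\cR,\cT)$: (i) $\cR$ is a finite subset of $\{0,1\}^I$ whose elements have exactly one entry of $1$ per choice problem; (ii) $\cT$ contains the canonical basis of $\mathbb{R}^I$, so that every $I$-vector with nonnegative integer components is a sum of elements of $\cT$; and (iii) $\mathbf{1}\Pi = J = \mathbf{1}R$ for every $R \in \cR$. All three survive the relabeling: $\tilde{\cR}$ is a finite subset of $\{0,1\}^{\tilde{I}}$, where $\tilde{I} = \sum_j 2^{|C_j|}$, with exactly one entry of $1$ per choice problem; singleton trials $\{\tilde{x}\} \subseteq \tilde{C}_j$ recover the canonical basis of $\mathbb{R}^{\tilde{I}}$ and lie in $\tilde{\cT}$; and the adding-up identity still gives $\mathbf{1}\tilde{\Pi} = J = \mathbf{1}\tilde{R}$ for every $\tilde{R} \in \tilde{\cR}$. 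The chain of equivalences $(1)$--$(4)$ in the proof of Theorem \ref{T1} then transcribes mechanically, delivering GARSP $\Leftrightarrow$ $\tilde{\Pi} \in \text{co}(\tilde{\cR})$.

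The main conceptual point, rather than a technical obstacle, is that trials $\tilde{T} \in \tilde{\cT}$ are indicator vectors on subsets of $\tilde{C}_j = 2^{C_j}$, i.e.\ on \emph{collections} of subsets of $C_j$, rather than on subsets of $C_j$ themselves. An axiom phrased directly in terms of subsets of $C_j$ would correspond to a strictly smaller family of trials and so be too weak; this is the sense in which the author's introductory remark about dropping decisiveness from \cite{McFadden05} should be read. Once $\tilde{\cT}$ is taken to be the natural trial set for the enlarged universe $\tilde{X}$, no new argument is required, and the theorem follows as an immediate corollary of Theorem \ref{T1}.
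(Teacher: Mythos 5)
Your proposal is correct and matches the paper, which simply asserts that ``the previous proof establishes'' Theorem \ref{T2}; you make the same reduction explicit by checking that the three structural properties used in the proof of Theorem \ref{T1} (binary $\tilde{\cR}$ with one ``1'' per problem, canonical basis contained in $\tilde{\cT}$, and the adding-up identity $\bm{1}\tilde{\Pi}=J=\bm{1}\tilde{R}$) survive the relabeling $\tilde{X}=2^X$. Your closing remark about trials being indicators on collections of subsets also correctly captures the paper's point about why the axiom in \cite{McFadden05} is too weak for set-valued choice.
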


Trivial as it may seem, this extension appears novel and contains two insights.
\begin{itemize}
\item Contrary to most of the literature, the empty set is included in the choice universe and all choice problems. Thus, Theorem \ref{T2} applies to the stochastic extension of incomplete preference models and the like. If the underlying nonstochastic choice model excludes the empty set, this will be reflected in $\tilde{\cR}$ and, therefore, in all rationalizable $\tilde{\Pi}$.\footnote{The same applies if $\tilde{\cR}$ forces choice of singletons, in which case Theorem \ref{T1} is recovered.}
\item \cite{mcfadden-richter} restrict attention to singleton-valued choice ``for simplicity" (footnote 2); \cite{McFadden05} presents an extremely general setup and passes to singleton-valued choice only after defining ARSP. While neither formally characterize set-valued stochastic choice, the reader might infer that, for example, Theorem 3.3 in \cite{McFadden05} holds with ARSP as reported there even if the choice function is not singleton valued. This is not the case. Both papers define choice probability as the probability that a set \textit{or any of its subsets} are chosen from the corresponding budget. This does not lose information and makes no difference at all for singleton-valued choice. But with set-valued choice, asserting ARSP for this probability restricts trials to query a set \textit{and all its subsets} within a specific choice problem. This is a small subset of the trial sequences that are actually needed, so the restriction to single-valued choice is crucial and the axiom is otherwise too weak.\footnote{Consider any setting where at least one choice problem is not a singleton, any $\tilde{\cR}$ that restricts choice to be singleton-valued, and the choice probability $\tilde{\Pi}$ that always selects the entire choice problem. For any trial that queries an entire choice problem and all its subsets, $\tilde{\Pi}$ as well as any $\tilde{R} \in \tilde{\cR}$ return value $1$. For any other trial, $\tilde{\Pi}$ returns $0$. Thus ARSP as defined in \cite[][section 2.3.3]{McFadden05} is fulfilled, but $\tilde{\Pi} \notin \text{co}(\cR)$. For that paper's Theorem 3.3 to apply to set-valued choice, choice probabilities should be defined as explained above and the set inclusions in expressions (1) and (2) should be equalities.} 
\end{itemize}
 
\section{Conclusion}
This note concisely proves that rationalizability of stochastic choice is characterized by (G)ARSP. Most of the work is done by a vector notation which obviates that the axiom really restricts separating hyperplanes. Beyond its simplicity, the approach (i) further clarifies some known facts about stochastic rationalizability and (ii) immediately generalizes to set-valued choice, where it allows for indecisive (empty valued) choice and clarifies the appropriate axiom.
\bibliography{KMS}

\end{document}